\newtheorem{theorem}{Theorem}
\newtheorem{proof}{Proof}
\begin{document}

\title{Backward in time problem of a double porosity material with microtemperature}



\author{Olivia Ana Florea\\
Faculty of Mathematics andComputer Science, Transilvania University of Bra\c sov, Romania\\
{olivia.florea@unitbv.ro}}

\date{}

\maketitle

\begin{abstract}
In the present study we consider the theory of thermoelastodynamics in the case of materials with double porosity structure and microtemperature. This study is devoted to the investigation of a backward in time problem associated with double porous thermoelastic materials with microtemperature. In the first part of the paper, in case of the bounded domains the impossibility of time localization of solutions is obtained. This study is equivalent to the uniqueness of solutions for the backward in time problem. In the second part of the paper, a Phragmen-Lindelof alternative in the case of semi-infinite cylinders is obtained.\\
\textbf{keywords}{Thermoelasticity with double porosity \and Microtemperature \and Backward problem \and Impossibility of localization}
\end{abstract}

\section{Introduction}
In the last years many authors were interested in the linear theory of elastic materials with double porosity. The first studies regarding this theory are encountered in the papers of   Barenblatt, \cite{1}.  The concept of double porosity model allows for the body to have a double porous structure:
a  macro  porosity  connected  to  pores  in  the  body  and  a  micro  porosity  connected  to
fissures in the skeleton.  According to Barneblatt, \cite{2},  Berryman, \cite{3} and Khalili, \cite{6}, the particular applications of materials with double porosity are in  geophysics
and according to Cowin, \cite{4} in mechanics of bone.  The basic equations for elastic materials with double porosity involve the
displacement vector field, a pressure associated with the pores and a pressure associated
with the fissures [6-8].  We note that in the equilibrium theory the fluid
pressures become independent of the displacement vector field.

The theory for the behaviour of porous
solids in which the skeletal or matrix materials are elastic and the interstices are void
of material was studied by Nunziato and Cowin, \cite{7}.  The intended applications of this theory are to geological materials such
as rocks and soils and to manufactured porous materials such as ceramics and pressed
powders. Iesan and Quintanilla, \cite{5} used the Nunziato-
Cowin theory of materials with voids to derive a theory of thermoelastic solids which have a double porosity structure. In
contrast with the classical theory of elastic materials with double porosity, the porosity
structure in the case of equilibrium is influenced by the displacement field.
According to Quintanilla \cite{8} is proved the impossibility of the localization in time of the solutions of the linear thermoelasticity with voids. 

The study of backward in time problem is very important from the thermomechanical point of view because it offers information about the behavior of the system in the past using the information that we have at the present time. Usualy the Saint Venant' s principle is used for the spatial behavior of the solutions for partial differential equations. The studies regarding the spatial decay estimates were obtained for elliptic \cite{flavin_knops}, parabolic [13-14] and hyperbolic \cite {flavin-knops2} equations. The main aim of the spatial decay estimates is to model the perturbations on a side of the boundary that are damped for the points located at some distance from this side of the boundary. For this analysis it is necessary to use a semi-infinite cylinder whose finite end is perturbed and our goal is to identify the effects when the spatial variable increases. The harmonic vibrations in thermoelastic dynamics with double porosity structure for the backward in time problem was studied by Florea, \cite{mms}.

The phenomenon for which the mechanisms of dissipation are very strong, such that the solutions vanish after a finite time, is known as the localization in time of solutions. The impossibility of localization in time of solutions is an open problem because the proof of this concept exists only in some linear situations. In the particular case of the linear thermodynamics theory of visco-elastic solids with voids, the solutions decay can be controlled by some particular exponential or polynomial functions, [17-21]. The problem of the impossibility of localization of solutions was proved for the classical thermoelasticity with porous dissipation \cite{pamplona} and in the isotherm case with porous and elastic viscosity \cite{quinta}.

The aim of our paper is to show that in the case of thermoelasticity with double porosity structure and microtemperature the only solution that vanishes after a finite time is the null solution, when the mechanisms of dissipation are the double porous dissipation, the temperature and the microtemperature. Our obtaining results can be also compared with those obtained in [17-21]. In our paper we will give information regarding the upper bound for the solution decay. In the previous results, [17-20], the authors proved that after a small period of time the thermomechanical deformations are very small and they can be neglected. In our paper we will highlight that they are not null for any positive time. The present study represents a continuation of the research regarding the impossibility of localization in thermo-porous-elasticity with microtemperatures realized by Quintanilla, \cite{quint}, using the results of Florea, \cite{ijam}.

The present study is structured as follows: in the second section the basic equations for the backward in time problem in the case of materials with double porosity structure and microtemperature are described. Also, in this section the conditions imposed on the parameters that influence the behavior of the porous materials are presented. The impossibility of localization in time of solutions for the backward in time problem for a double porous material with microtemperature is expressed in the third section. We state here the conservation of the energy law and we highlight the main theorem of the present study. For the particular case of a semi-infinite cylinder a Phragmen-Lindelof alternative is obtained in the section 4. In the last section of the paper are drawn the conclusions of the present study.

\section{Basic equations for the double porous materials with microtemperature}
The equations of evolution that govern the problem of thermoelasticity with double porosity structure for the materials with microtemperature in the absence of the supply terms are, \cite{casas1}, \cite{casas2}:

\begin{flalign}
\label{m1}
t_{ji,j}&=\rho\ddot u_i\nonumber\\
\sigma_{j,j}+\xi&=k_1\ddot\phi\\
\tau_{j,j}+\zeta&=k_2\ddot\psi\nonumber
\end{flalign}
where: $\rho$ is the mass density, $k_1,k_2$ are the coefficients of equilibrated inertia, $\sigma_j,\tau_j$ are the equilibrated stress vectors, $\xi,\zeta$ are the intrinsic equilibrated body forces, $t_{ji}$ are the stress tensors, $u_i$ is the displacement, $\phi, \psi$ are the volume fraction fields in the reference configuration.

The equation of energy is given in \eqref{m2} and the equation of the first moment of energy is given in \eqref{m3}:
\begin{align}
\label{m2}
\rho T_0\dot\eta=Q_{j,j}
\end{align}
\begin{align}
\label{m3}
\rho\dot\varepsilon_i=Q_{ji,j}+Q_i-q_i
\end{align}
where where $T_0$ is the constant absolute temperature of the body in the reference configuration, $\eta$ is the entropy, $Q_j$ is the heat flux$, \varepsilon_i$ represent the first moment of energy vector and $q_i$ is the microheat flux average, $Q_{ji}$ is the first heat flux moment tensor.

We will consider in our study that we deal with a centrosymmetric material. In this case the constitutive equations for the linear theory are:
\begin{flalign}
\label{f4}
t_{ij} &= C_{ijkl}u_{k,l}+B_{ij}\phi+D_{ij}\psi-\beta_{ij}\theta \nonumber\\
\sigma_i &=\alpha_{ij}\phi_{,j}+b_{ij}\psi_{,j}- N_{ij}T_j\nonumber\\
\tau_i &= b_{ji}\phi_{,j}+\gamma_{ij}\psi_{,j}-M_{ij}T_j\nonumber\\
\xi &=-B_{ij}u_{i,j}-\alpha_1\phi-\alpha_3\psi+\gamma_1\theta\\
\zeta &=-D_{ij}u_{i,j}-\alpha_3\phi-\alpha_2\psi+\gamma_2\theta \nonumber\\ 
\rho\eta &=\beta_{ij}u_{i,j}+\gamma_1\phi+\gamma_2\psi+a\theta\nonumber\\
Q_i &= \kappa_{ij}\theta_{,j}+L_{ij}T_j\nonumber\\
\rho\varepsilon_i &=-N_{ij}\phi_{,j}-M_{ji}\psi_{,j}-P_{ij}T_j\nonumber\\
Q_{ij} &=-A_{ijrs}T_{s,r}\nonumber\\
q_i &=(L_{ij}-R_{ij})T_j+(\kappa_{ij}-\lambda_{ij})\theta_{,j}\nonumber
\end{flalign}
where $C_{ijkl}$ is the elasticity tensor, $\beta_{ij}$ is the thermal dilatation tensor,
$\kappa_{ij}$ is the heat conductivity tensor, $\beta_{ij}$ is the tensor of thermal dilatation, $B_{ij}, D_{ij}, \alpha_{ij}, b_{ij}, \gamma_{ij}, \alpha_1, \alpha_2, \alpha_3, \gamma_1, \gamma_2, a$ are typical functions in double porous theory and $N_{ij}, M_{ij}, R_{ij}, \lambda_{ij}, A_{ijrs}$ are tensors which are usual in the theories with microtemperatures. In the constitutive equations \eqref{f4} $\theta$ represents the temperature and $T_i$ are the microtemperatures.

Introducing the constitutive equations \eqref{f4} into the evolution equations \eqref{m1} the system of the field equations for the thermoelasticity with double porosity and microtemperatures is obtained: 
\begin{subequations}
\begin{equation}
\label{5a}
\rho\ddot u_i =\left(C_{jikl}u_{k,l}+B_{ji}\phi+D_{ij}\psi-\beta_{ij}\theta\right)_{,j}\tag{2.5.a}
\end{equation}
\begin{equation}
\label{5b}
k_1\ddot\phi=\left(\alpha_{ij}\phi_{,i}+b_{ij}\psi_{,i}-N_{ij}T_j\right)_{,j}-B_{ij}u_{i,j}-\alpha_1\phi-\alpha_3\psi+\gamma_1\theta\tag{2.5.b}
\end{equation}
\begin{equation}
\label{5c}
k_2\ddot\psi=\left(b_{ij}\phi_{,i}+\gamma_{ij}\psi_{,i}-M_{ij}T_j\right)_{,j}-D_{ij}u_{i,j}-\alpha_3\phi-\alpha_2\psi+\gamma_2\theta\tag{2.5.c}
\end{equation}
\begin{equation}
\label{5d*}
a\dot\theta =-\beta_{ij}\dot u_{i,j}-\gamma_1\dot\phi-\gamma_2\dot\psi+\frac{1}{T_0}\left(\kappa_{ij}\theta_{,j}+L_{ij}T_j\right)_{,j}\tag{2.5.d*}
\end{equation}
\begin{equation}
\label{5e*}
P_{ij}\dot T_j=\left(A_{ijrs}T_{s,r}\right)_{,j}-R_{ij}T_j-\lambda_{ij}\theta_{,j}-N_{ij}\dot\phi_{,j}-M_{ij}\dot\psi_{,j}\tag{2.5.e*}
\end{equation}
\end{subequations}
Proving the uniqueness of the solution of the backward in time problem, implies the impossibility of localization of the solutions of the above system. The system of equations which describes the backward in time problem is given by the same set of equations as \eqref{5a}-\eqref{5c} while \eqref{5d*} and \eqref{5e*} change into:
\begin{subequations}
\begin{equation}
\label{5d}
a\dot\theta =-\beta_{ij}\dot u_{i,j}-\gamma_1\dot\phi-\gamma_2\dot\psi-\frac{1}{T_0}\left(\kappa_{ij}\theta_{,j}+L_{ij}T_j\right)_{,j}\tag{2.5.d}
\end{equation}
\begin{equation}
\label{5e}
P_{ij}\dot T_j=-\left(A_{ijrs}T_{s,r}\right)_{,j}+R_{ij}T_j+\lambda_{ij}\theta_{,j}-N_{ij}\dot\phi_{,j}-M_{ij}\dot\psi_{,j}\tag{2.5.e}
\end{equation}
\end{subequations}

Because the constitutive coefficients are symmetric we have:
$$C_{ijkl}=C_{klij}; \alpha_{ij}=\alpha_{ji}; b_{ij}=b_{ji}; B_{ij}=B_{ji}, D_{ij}=D_{ji}.$$
For the case of anisotropic and homogeneous material we can draw the assumption that the tensors $A_{ijrs}, P_{ij}, N_{ij}, M_{ij}, L_{ij}, R_{ij}, \lambda_{ij}$ are also symmetric:
$$A_{ijkl}=A_{lkij}, P_{ij}=P_{ji}, M_{ij}=M_{ji}, L_{ij}=L_{ji}, N_{ij}=N_{ji}, R_{ij}=R_{ji}, \lambda_{ij}=\lambda_{ji}.$$
In the context of theories with microtemperature as a consequence of Clausius-Duhem inequality, we have the following assumption, \cite{casas1}:
\setcounter{equation}{5}
\begin{equation}
\label{f8}
\kappa_{ij}\theta_{,i}+\left(L_{ij}+T_0\lambda_{ij}\right)\theta_{,j}T_i+T_0R_{ij}T_iT_j+T_0A_{jirs}T_{i,j}T_{s,r}\geq 0
\end{equation}
In order to obtain the estimated results it is necessary to impose the positivity of several functions and tensors:
\begin{subequations}
\begin{flalign}
\label{a1}\tag{a.1}
\rho(X)\geq\rho_0>0; \hspace{1em} k_1(X)\geq k_0^1>0;\hspace{1em}
k_2(x)\geq k_0^2>0; \nonumber\\
\hspace{1em} a(x)\geq a_0>0; \hspace{1em} P_{ij}\xi_i\xi_j\geq p_0\xi_i\xi_i, p_0>0\nonumber
\end{flalign}
\begin{equation}
\label{a2}\tag{a.2}
\kappa_{ij}\xi_i\xi_j+(L_{ij}+T_0\lambda{ij})\xi_j\zeta_i+T_0R_{ij}\zeta_i\zeta_j\geq C_0(\xi_i\xi_i+\zeta_i\zeta_i), C_0>0, (\forall) \xi_i\zeta_i
\end{equation}
\begin{flalign}
C_{ijkl}u_{i,j}u_{k,l}+\alpha_{ij}\phi_{,i}\phi_{,j}+\gamma_{ij}\psi_{,i}\psi_{,j}+2b_{ij}\phi_{,i}\psi_{,j}+2B_{ij}u_{i,j}\phi+2D_{ij}u_{i,j}\psi
+\nonumber\\+\alpha_1\phi^2+\alpha_2\psi^2
+2\alpha_3\phi\psi
\label{a3}\tag{a.3}
\geq C^*\left(u_{i,j}u_{i,j}+\phi_{,i}\phi_{,i}+\psi_{,i}+\psi_{,i}+\phi^2+\psi^2\right), 
\end{flalign}
\begin{equation*}
\alpha_{ij}\xi_i\xi_j\geq0; \hspace{1em}b_{ij}\xi_i\xi_j\geq 0, (\forall) \xi_i
\end{equation*}
\begin{equation}
\label{a4}\tag{a.4}
A_{jirs}\xi_{ij}\xi_{sr}\geq C_1\xi_{ij}\xi_{ij}, C_1>0, (\forall) \xi_{ij}
\end{equation}
\end{subequations}

The assumption \eqref{a1} is related to the thermomechanical characteristics, \eqref{a2} and \eqref{a4} are consequences of the Clausius-Duhem inequality, \eqref{a3} gives the information that the internal energy is positive and may be expressed based on the theory of mechanical stability. 

\section{Main results regarding the impossibility of localization in time}

Let us consider a bounded domain $B$ with the boundary $\partial B$. The study of impossibility of localization in time for solutions of the backward in time problem is equivalent with the study of the uniqueness of solutions for the mentioned problem given by the system of equations \eqref{5a}-\eqref{5e}. To prove the uniqueness of solutions for the backward in time problem it is sufficient to show that  only the null solution satisfies our problem with null initial and boundary conditions. In the next computations we assume that the domain $B$ is smooth enough to apply the divergence theorem.

The initial conditions are:
\begin{flalign}
\label{e6}
u_i(\bm{X},0)=\dot u_i(\bm{X},0)=\phi(\bm{X},0)=\dot\phi(\bm{X},0)=0\\
\psi(\bm{X},0)=\dot\psi(\bm{X},0)=\theta(\bm{X},0)=0, \hspace{2em} T_i(\bm{X},0)=0\hspace{2em}\bm{X}\in B\nonumber
\end{flalign}
and the boundary conditions:

\begin{flalign}
\label{e7}
u_i(\bm{X},t)=\phi(\bm{X},t)=\psi(\bm{X},t)=\theta(\bm{X},t)=T_i(\bm{X},0)=0,\hspace{2em} \bm{X}\in\partial B, t\geq 0
\end{flalign}

The aim of this section is to obtain the nergy relation for the double porous material with microtemperature. We will multiply \eqref{5a} by $\dot u_i$, \eqref{5b} by $\dot\phi$, \eqref{5c} by $\dot\psi$, \eqref{5d} by $\theta$ and \eqref{5e} by $T_j$, the obtained relations will be integrated on $[0,t]$ and they will be summed. Using the divergence theorem and taking into account the boundary conditions based on the principle of conservation of energy, we have the following relation:
\begin{flalign}
\label{f11}
E_1(t)=&\frac{1}{2}\int\limits_B\left(\rho\dot u_i\dot u_i+k_1\dot\phi^2+ k_2\dot\psi^2+ a\theta^2+ P_{ij}T_jT_j+C_{ijkl}u_{i,j}u_{k,l}+2B_{ij}\phi u_{i,j} +\right.\nonumber\\
&+\left. 2D_{ij}\psi u_{i,j}+\alpha_{ij}\phi_{,j}\phi_{,j}+\gamma_{ij}\psi_{,j}\psi_{,j}+2b_{ij}\psi_{,j}\phi_{,j}+\alpha_1\phi^2+2\alpha_3\phi\psi+\alpha_2\psi^2\right)dV=\\
&=\int\limits_0^t\int\limits_B\left[\frac{1}{T_0}\left(\kappa_{ij}\theta_{,i}\theta_{,j}+L_{ij}T_i\theta_{,j}\right)+A_{ijrs}T_{s,r}T_{i,j}+R_{ij}T_iT_j+\lambda_{ij}\theta_{,j}T_i\right]dVds\nonumber
\end{flalign}
Using the same procedure of multiplying the equations \eqref{5a} by $\dot u_i$, \eqref{5b} by $\dot\phi$, \eqref{5c} by $\dot\psi$, \eqref{5d} by $-\theta$ and \eqref{5e} by $-T_j$, integrating on $[0,t]$ and using the divergence theorem we have the following expression:
\begin{flalign}
\label{f12}
E_2(t)=\frac{1}{2}\int\limits_B\left(\rho\dot u_i\dot u_i+k_1\phi^2+k_2\psi^2-a\theta^2-P_{ij}T_jT_j+C_{ijkl}u_{k,l}u_{i,j}+2B_{ij}\phi u_{i,j}+2D_{ij}\psi u_{i,j}\right.+\nonumber\\
\left.+\alpha_{ij}\phi_{,j}\phi_{,j}+2b_{ij}\psi_{,j}\phi_{,j}+\alpha_1\phi^22\alpha_3\phi\psi+\gamma_{ij}\psi_{,j}\psi_{,j}+\alpha_2\psi^2\right)dV=\\
=-\int\limits_{0}^t\left[\frac{1}{T_0}\int\limits_B\left(\kappa_{ij}\theta_{,j}\theta_{,j}+L_{ij}T_j\theta_{,j}+A_{ijrs}T_{s,r}T_{j,i}+R_{ij}T_jT_j+\lambda_{ij}\theta_{,j}T_j\right)dV\right]ds+\nonumber\\
+\int\limits_0^t\int\limits_B\left[\left(\beta_{ij}\theta\right)_{,j}\dot u_i-\left(M_{ij}T_j\right)\dot\psi-\left(N_{ij}T_j\right)\dot\phi+\gamma_1\theta\dot\phi+\gamma_2\theta\dot\psi\right]dVds\nonumber
\end{flalign}
Taking into consideration the equations \eqref{5a}-\eqref{5e}, the initial and boundary conditions, \eqref{e6}, \eqref{e7}, the following identity is obtained:
\begin{flalign}
\label{f13}
\int\limits_B\left(\rho\dot u_i\dot u_i+k_1\dot\phi^2+k_2\dot\psi^2-a\theta^2-P_{ij}T_jT_j\right)dV&=\int\limits_B\left(C_{ijkl}u_{i,j}u_{k,l}+\alpha_{ij}\phi_{,i}\phi_{,j}+\gamma_{ij}\psi_{,i}\psi_{,j}+2b_{ij}\psi_{,j}\phi_{,i}+\right.\nonumber\\
&\left.+2B_{ij}u_{i,j}\phi+2D_{ij}u_{i,j}\psi+\alpha_1\phi^2+\alpha_2\psi^2+2\alpha_3\phi\psi\right)dV
\end{flalign}

The impossibility of localization of the solutions in the theory with double porosity and microtemperature is proved in the following theorem.
\begin{theorem}
Let $(u_i,\phi,\psi, \theta, T_i)$ be a solution of the backward in time problem \eqref{5a}-\eqref{5e} with the initial conditions \eqref{e6} and the boundary conditions \eqref{e7}. The only solution of the mentioned problem is the null solution $u_i=0,$ $\phi=0,$ $\psi=0,$ $\theta=0,$ $T_i=0$.
\end{theorem}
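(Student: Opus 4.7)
The plan is to combine the three identities \eqref{f11}, \eqref{f12}, \eqref{f13} into a single Gronwall inequality for $E_1(t)+E_2(t)$, and then read off the null solution from the non-negativity of $E_1$ and $E_2$ separately. Set
$$\mathcal K(t):=\int_B\bigl(\rho\dot u_i\dot u_i + k_1\dot\phi^2 + k_2\dot\psi^2\bigr)\,dV,\qquad \mathcal T(t):=\int_B\bigl(a\theta^2 + P_{ij}T_iT_j\bigr)\,dV,$$
and write $\mathcal S(t)$ for the strain-energy integral on the right of \eqref{f13}. Then \eqref{f13} reads $\mathcal K - \mathcal T = \mathcal S$, which turns \eqref{f11} into $E_1(t)=\mathcal K(t)=\int_0^t \mathcal D(s)\,ds$ and \eqref{f12} into $E_2(t)=\mathcal S(t)=\int_0^t\bigl(\mathcal C(s)-\mathcal D(s)\bigr)\,ds$, where $\mathcal D(s)$ denotes the non-negative dissipation integral appearing on the right of \eqref{f11} and $\mathcal C(s)$ denotes the coupling integral on the last line of \eqref{f12}. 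By \eqref{a1} and \eqref{a3} both $E_1,E_2\ge0$, and both vanish at $t=0$; adding eliminates $\mathcal D$ and yields $(E_1+E_2)(t)=\int_0^t \mathcal C(s)\,ds$.

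The central estimate I would then derive is
$$|\mathcal C(s)|\le \epsilon\,\mathcal D(s)+c_\epsilon\,(E_1+E_2)(s)\qquad (\epsilon\in(0,1)).$$
Each summand of the coupling factors as a ``thermal'' piece drawn from $\{\theta,\theta_{,j},T_j\}$ times a ``mechanical'' piece drawn from the velocities $\{\dot u_i,\dot\phi,\dot\psi\}$. The thermal $L^2(B)$-norms are bounded by $c\sqrt{\mathcal D(s)}$ via \eqref{a2}, \eqref{a4} and the Poincaré inequality (applicable because $\theta|_{\partial B}=0$ in \eqref{e7}), while the mechanical $L^2(B)$-norms are bounded by $c\sqrt{\mathcal K(s)}\le c\sqrt{(E_1+E_2)(s)}$ via \eqref{a1}. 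Cauchy--Schwarz followed by Young's inequality then gives the estimate. Integrating over $[0,t]$ and using $\int_0^t \mathcal D\,ds=E_1(t)\le(E_1+E_2)(t)$ produces
$$(1-\epsilon)(E_1+E_2)(t)\le c_\epsilon\int_0^t(E_1+E_2)(s)\,ds,$$
so Gronwall's lemma with $(E_1+E_2)(0)=0$ forces $E_1+E_2\equiv 0$. Non-negativity then gives $E_1=\mathcal K=0$ and $E_2=\mathcal S=0$ separately; the rearrangement $\mathcal T=\mathcal K-\mathcal S=E_1-E_2=0$ of \eqref{f13} combined with the positivity of $a$ and $P_{ij}$ in \eqref{a1} yields $\theta=T_i=0$ and $\dot u_i=\dot\phi=\dot\psi=0$. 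The null initial data \eqref{e6} then close off $u_i,\phi,\psi$.

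The step I expect to require the most care is the coupling estimate, since the derivation of \eqref{f12} from \eqref{5a}--\eqref{5e} may produce cross terms containing spatial derivatives of velocities (for instance a term of the form $M_{ij}T_i\dot\psi_{,j}$ arising from pairing the divergence $(M_{ij}T_j)_{,j}$ in \eqref{5c} with the $-M_{ij}\dot\psi_{,j}$ of \eqref{5e} when the latter is multiplied by $-T_j$). Any such term falls outside $\mathcal K$ and must be routed through an integration by parts in time, exploiting the vanishing initial data $\phi(0)=\psi(0)=T_i(0)=0$ to rewrite it as a boundary-in-time contribution bounded by $\epsilon\,\mathcal D(t)+c_\epsilon E_2(t)$ plus a residual integral already of the desired Gronwall form. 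Once this bookkeeping is carried out summand by summand, the Gronwall closure is immediate.
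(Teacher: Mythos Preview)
Your argument is correct and follows essentially the same route as the paper's: both combine the identities \eqref{f11}, \eqref{f12}, \eqref{f13}, estimate the coupling $\mathcal C$ by Young's inequality, absorb the resulting small dissipation piece, and close with Gronwall from zero initial data. The only organizational difference is that the paper works with the weighted sum $E=E_2+\varepsilon E_1$ for $\varepsilon\in(0,1)$, so that $dE/dt=-(1-\varepsilon)\mathcal D+\mathcal C$ carries a surviving negative dissipation term which directly swallows the $\epsilon\,\mathcal D$ arising from the coupling bound, yielding a differential Gronwall $dE/dt\le C^*E$. You instead take equal weights, use the identity $\int_0^t\mathcal D\,ds=E_1\le E_1+E_2$ to absorb the dissipation after time-integration, and run an integral Gronwall. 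These are equivalent book-keepings of the same idea.

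One remark on your closing paragraph: the terms you flag, of the type $M_{ij}T_i\dot\psi_{,j}$, are dealt with more directly by a \emph{spatial} integration by parts (using $\dot\psi|_{\partial B}=0$) rather than a temporal one. This moves the derivative onto $T_i$, producing $M_{ij}T_{i,j}\dot\psi$; the factor $T_{i,j}$ is then controlled by the dissipation through assumption \eqref{a4} on $A_{ijrs}$, and $\dot\psi$ sits in $\mathcal K$. The paper writes the coupling already in the form $(M_{ij}T_i)_{,j}\dot\psi$ for precisely this reason. With that adjustment your coupling estimate $|\mathcal C(s)|\le\epsilon\,\mathcal D(s)+c_\epsilon(E_1+E_2)(s)$ holds pointwise in $s$, and no boundary-in-time terms appear.
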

\begin{proof}
Replacing  \eqref{f13} into \eqref{f12} we obtain a new expression for $E_2(t)$:
\begin{flalign*}
E_2(t)=\int\limits_B\left(C_{ijkl}u_{i,j}u_{k,l}+\alpha_{ij}\phi_{,i}\phi_{,j}+\gamma_{ij}\psi_{,i}\psi_{,j}+2b_{ij}\phi_{,i}\psi_{,j}+2B_{ij}u_{i,j}\phi+\right.\\
+\left. 2D_{ij}u_{i,j}\psi+\alpha_1\phi^2+\alpha_2\psi^2+2\alpha_3\phi\psi\right)dV=\\
=-\int\limits_0^t\int\limits_B\frac{1}{T_0}\left(\kappa_{ij}\theta_{,i}\theta_{,j}+L_{ij}\theta_{,j}T_i+T_0A_{ijrs}T_{s,r}T_{j,i}+T_0R_{ij}T_iT_j+T_0\lambda_{ij}\theta_{,j}T_i\right)dVds+\\
+\int\limits_0^t\int\limits_B\left[(\beta_{ij}\theta)_{,j}-(M_{ij}T_i)_{,j}\dot\psi-(N_{ij}T_i)_{,j}\dot\phi+\gamma_1\theta\dot\phi+\gamma_2\theta\dot\psi\right]dVds
\end{flalign*}
The energy can be wxpressed under the bellow form, if we consider a positive constant $\varepsilon$, small enough:
$$E(t)=E_2(t)+\varepsilon E_1(t), \hspace{2em} \varepsilon\in (0,1)$$
Taking into account that $E(t)$ is a positive function we have the following form for the energy:
\begin{flalign*}
E(t)&=\frac{\varepsilon}{2}\int\limits_B\left(\rho\dot u_i\dot u_i+k_1\dot\phi^2+k_2\dot\psi^2+a\theta^2+P_{ij}T_iT_j\right)dV+\\
+&\frac{2+\varepsilon}{2}\int\limits_B\left(C_{ijkl}u_{i,j}u_{k,l}+\alpha_{i,j}\phi_{,i}\phi_{,j}+\gamma_{ij}\psi_{,i}\psi_{,j}+2b_{ij}\phi_{,i}\psi_{,j}+2B_{ij}u_{i,j}\phi+2D_{ij}u_{i,j}\psi+\right.\\
+&\left.\alpha_1\phi^2+\alpha^2\psi^2+2\alpha^3\phi\psi \right)dV
\end{flalign*}
On the other hand,
\begin{flalign*}
E(t)=&-\int\limits_0^t\int\limits_B\frac{1}{T_0}\left(\kappa_{ij}\theta_{,i}\theta_{,j}+L_{ij}\theta_{,j}T_i+T_0A_{ijrs}T_{s,r}T_{j,i}+T_0R_{ij}T_iT_j+T_0\lambda_{ij}\theta_{,j}T_i\right)dVds+\\
+&\int\limits_0^t\int\limits_B\left[(\beta_{ij}\theta)_{,j}\dot u_i-(M_{ij}T_i)_{,j}\dot\psi-(N_{ij}T_i)_{,j}\dot\phi+\gamma_1\theta\dot\phi+\gamma_2\theta\dot\psi\right]dVds+\\
+&\varepsilon\int\limits_0^t\int\limits_B\frac{1}{T_0}\left(\kappa_{ij}\theta_{,i}\theta_{,j}+L_{ij}T_i\theta_{,j}+T_0A_{ijrs}T_{s,r}T_{j,i}+T_0R_{ij}T_iT_j+T_0\lambda_{ij}T_i\theta_{,j}\right)dVds
\end{flalign*}
The above relation yields, for $\varepsilon\in(0,1)$:
\begin{flalign*}
E(t)=&-(1-\varepsilon)\int\limits_0^t\int\limits_B\frac{1}{T_0}\left(\kappa_{ij}\theta_{,i}\theta_{,j}+L_{ij}\theta_{,j}T_i+T_0A_{ijrs}T_{s,r}T_{j,i}+T_0R_{ij}T_iT_j+T_0\lambda_{ij}T_i\theta_{,j}\right)dVds+\\
+&\int\limits_0^t\int\limits_B\left[ (\beta_{ij}\theta)_{,j}\dot u_i-(M_{ij}T_i)_{,j}\dot\psi-(N_{ij}T_i)_{,j}\dot\phi+\gamma_1\theta\dot\phi+\gamma_2\theta\dot\psi\right]dVds
\end{flalign*}
from where:
\begin{flalign*}
\frac{dE(t)}{dt}=&-(1-\varepsilon)\int\limits_B\frac{1}{T_0}\left(\kappa_{ij}\theta_{,i}\theta_{,j}++L_{ij}\theta_{,j}T_i+T_0A_{ijrs}T_{s,r}T_{j,i}+T_0R_{ij}T_iT_j+T_0\lambda_{ij}T_i\theta_{,j}\right)dVds+\\
+&\int\limits_B\left[ (\beta_{ij}\theta)_{,j}\dot u_i-(M_{ij}T_i)_{,j}\dot\psi-(N_{ij}T_i)_{,j}\dot\phi+\gamma_1\theta\dot\phi+\gamma_2\theta\dot\psi\right]dVds
\end{flalign*}
but,
$$\int\limits_B \left(\beta_{ij}\theta\right)_{,j}\dot u_idV=\int\limits_B\beta_{ij,j}\theta\dot u_i dV+\int\limits_B\beta_{ij}\theta_{,j}\dot u_idV$$
The inequality of arithmetic and geometric means implies that:
$$\int\limits_B\left(\beta_{ij}\theta\right)_{,j}\dot u_idV\leq C_1\int\limits_B\left(\rho\dot u_i\dot u_i+a\theta^2\right)dV+\varepsilon_1\int\limits_B\kappa_{ij}\theta_{,i}\theta_{,j}dV$$
where $\varepsilon_1$ is small enough, $C_1$ is a positive constant that can be determined based on the constitutive coefficients and $\varepsilon_1$;
$$\int\limits_B(M_{ij}T_i){,j}\dot\psi dV\leq C_2\int\limits_B\left(k_2\dot\phi^2+P_{ij}T_iT_j\right)dV$$
where $C_2$ can be determined. Therefore, there is a positive constant $C$ such that:
$$\frac{dE}{dt}\leq C\int\limits_B\left(\rho\dot u_i\dot u_i+k_1\dot\phi^2+k_2\dot\psi^2+a\theta^2+P_{ij}T_iT_j\right)dV$$
which is equivalent with the estimate:
$$\frac{dE}{dt}\leq C^* E(t)\Leftrightarrow \frac{dE}{E}\leq C^* dt\Leftrightarrow \ln E\leq C^*t+\mathcal{C}\Leftrightarrow E(t)\leq \mathcal{C}e^{C^*t}.$$
For $t=0$ we will have the estimate: $$E(t)\leq E(0)e^{C^*t}$$
But, the initial condition leads us to $E(t)=0$ for every $t\geq 0$ that is equivalent with:
$$\dot u_i=0; \dot\phi=0; \dot\psi=0;\theta=0; T_i(t)=0 \Leftrightarrow u_i=C_1; \phi=C_2;\psi=C_3;\theta=T_i=0$$
taking into account the initial conditions \eqref{e6} we obtain that the solution for our problem is the null solution:
$$u_i=0;\phi=0;\psi=0; \theta=0; T_i=0$$
\end{proof}
\section{Phragmen-Lindelof alternative for the solution of backward in time problem with double porosity and microtemperature}

We consider a semi-infinite prismatic cylinder $B=D\times[0,\infty)$ that is occupied by a body with a double porosity structure with micro-temperature. By $D$ we note the cross section in the cylinder. The boundary of the section is a piece-wise continuously differentiable curve denoted by $\partial D$ sufficiently smooth to admit application of divergence theorem. The lateral surface of the cylinder is $\Pi=\partial D\times(0,\infty)$. The cylinder is assumed to be free of load on the lateral boundary surface.\\
The lateral boundary conditions are:
\begin{flalign}
\label{e15}
u_i(\textbf{X},t)=0; \phi(\textbf{X},t)=0; \psi(\textbf{X},t)=0; \theta(\textbf{X},t)=0; T_i(\textbf{X}, T)=0 (\textbf{X},t)\in\Pi\times(0,\infty)
\end{flalign}
On the base of the cylinder the following boundary conditions are assumed:
\begin{flalign}
\label{e16}
u_i(x_1,x_2,0,t)=\tilde u_i;\phi(x_1,x_2,0,t)=\tilde\phi;\nonumber\\
\psi(x_1,x_2,0,t)=\tilde\psi;\theta(x_1,x_2,0,t)=\tilde\theta; T_i(x_1,x_2,0,t)=\tilde T_i
\end{flalign}
For the solution of the problem determined by the system \eqref{5a}-\eqref{5e} with initial conditions \eqref{e15} and boundary conditions \eqref{e16} we want to obtain a Phragmen-Lindelof alternative necessary for the interpretation of the behavior of the solution of our boundary value problem. Our aim in this section is to estimate the absolute value of the defined function $H_\omega$ from \eqref{e17} by means of its spatial derivative.

We define the function:
\begin{flalign}
\label{e17}
H_\omega(z,t)&=\int\limits_0^t\int\limits_{D(z)}e^{-2\omega s}\left[C_{i3kl}u_{k,l}+B_{i3}\phi+D_{i3}\psi-\beta_{3i}\theta\right]\dot u_i dads+\nonumber\\
&+\int\limits_0^t\int\limits_{D(z)}e^{-2\omega s}\left[\alpha_{i3}\phi_{,i}+b_{i3}\psi_{,i}-N_{i3}T_i\right]\dot\phi dads+\\
&+\int\limits_0^t\int\limits_{D(z)}e^{-2\omega s}\left[b_{3i}\phi_{,i}+\gamma_{i3}\psi_{,i}-M_{i3}T_i\right]\dot\psi dads+\nonumber\\
&+\int\limits_0^t\int\limits_{D(z)}e^{-2\omega s}\frac{1}{T_0}\left[\kappa_{i3}\theta_{,i}+L_{i3}T_i\right]\theta dads\nonumber\\
&+\int\limits_0^t\int\limits_{D(z)}e^{-2\omega s}\left(A_{3irs}T_{s,r}+R_{i3}T_i+\lambda_{ij}\theta_{,i}\right)T_i dads\nonumber
\end{flalign}
Here we have $D(z)=\{\textbf{X}\in B|x_3=z\}$ that denotes the cross section of the cylinder at a distance $z$ from the base.
Through means of the divergence theorem and employing the field equations, boundary and initial conditions we obtain:
\begin{flalign}
\label{e18}
H_\omega(z+h,t)-H_\omega(z,t)=\frac{1}{2}\int\limits_{R(z+h,z)}\chi_\omega(t)dV, (\forall)h>0
\end{flalign}
where $R(z+h,z)=\{\textbf{X}\in B|z<x_3<z+h\}$.

The internal energy is:
\begin{flalign}
\label{20}
\Phi &=\rho\dot u_i\dot u_i+k_1\dot\phi^2+k_2\dot\psi^2+a\theta^2+P_{ij}T_jT_j+C_{ijkl}u_{i,j}u_{k,l}++2B_{ij}u_{i,j}\phi+2D_{ij}u_{i,j}\psi+\\
&+\alpha_{ij}\phi_{,i}\phi_{,j}+\gamma_{ij}\psi_{,i}\psi_{,j}+2b_{ij}\phi_{,i}\phi_{,j}+\alpha_1\phi^2+\alpha_2\psi^2+2\alpha_3\phi\psi\nonumber
\end{flalign}
such that:
\begin{flalign}
\label{e19}
\chi_\omega(t)=e^{-2\omega t}\Phi(t)+\int\limits_0^t e^{-2\omega s}\left[2\omega\Phi(s)+2\frac{\kappa_{ij}}{T_0}\theta_{,i}(s)\theta_{,j}(s)+2\frac{L_{ij}}{T_0}\theta_{,i}T_j(s)\right.\\
+\left. 2A_{ijrs}T_{s,r}(s)T_{i,j}(s)+2R_{ij}T_i(s)T_j(s)+2\lambda_{ij}\theta_{,i}T_j(s)\right]ds\nonumber
\end{flalign}
From \eqref{e18} we have:
$$\frac{\partial H_\omega}{\partial z}=\frac{1}{2}\int\limits_{D(z)}\chi_\omega(t)dz$$
that leads to the following relation:
\begin{flalign}
\label{e21}
\frac{\partial H_\omega}{\partial z}=\frac{e^{-2\omega t}}{2}\int\limits_{D(z)}\Phi(t)da+\int\limits_0^t\int\limits_{D(z)}e^{-2\omega s}\left[\omega\Phi(s)+W\right]dads
\end{flalign}
where,
\begin{flalign*}
W= \frac{\kappa_{ij}}{T_0}\theta_{,i}\theta_{,j}+\frac{L_{ij}}{T_0}\theta_{,i}T_j+A_{ijrs}T_{s,r}T_{i,j}+R_{ij}T_iT_j+2\lambda_{ij}\theta_{,i}T_j
\end{flalign*}
Further, we want to estimate the absolute value of $H_\omega$ in terms of spatial derivatives, in order to get a differential inequality, such that:
\begin{flalign}
\label{e22}
|H_\omega|\leq C_\omega\frac{\partial H_\omega}{\partial z}, \hspace{1em}(\forall)z\geq 0
\end{flalign}
The above inequality is known in the literature of specialty reagarding the spatial estimate as Phragm\'en-Lindel\"of alternative.

Under the assumption\eqref{a3} the internal energy from \eqref{20} leads us to the following inequality:
\begin{flalign*}
\Phi&\geq \rho\dot u_i\dot u_i+k_1\dot\phi^2+k_2\dot \psi^2+a\theta^2+P_{ij}T_iT_i+C^*\left(u_{i,j}u_{i,i}+\phi_{,i}\phi_{,i}+\psi_{,i}\psi_{,i}+\phi^2+\psi^2\right)
\end{flalign*}
Therefore the relation \eqref{e18} yields:
\begin{flalign*}
&H_\omega(z+h,t)-H_\omega(z,t)\geq\\
&\geq\frac{1}{2}\int\limits_R e^{-2\omega s}\left[\rho\dot u_i\dot u_i+k_1\dot\phi^2+k_2\dot \psi^2+a\theta^2+P_{ij}T_iT_i+C^*\left(u_{i,j}u_{i,i}+\phi_{,i}\phi_{,i}+\psi_{,i}\psi_{,i}+\phi^2+\psi^2\right)\right]da+\\
&+\int\limits_0^t\int\limits_R e^{-2\omega s}\left\{\omega\left[k_1\dot\phi^2+k_2\dot \psi^2+a\theta^2+P_{ij}T_iT_i+C^*\left(u_{i,j}u_{i,i}+\phi_{,i}\phi_{,i}+\psi_{,i}\psi_{,i}+\phi^2+\psi^2\right)\right]+\kappa_{ij}\theta_{,i}\theta_{,i}\right.\\
&+\left.\frac{\kappa_{ij}}{T_0}\theta_{,i}\theta_{,i}+\frac{L_{ij}}{T_0}\theta_{,i}T_i+A_{ijrs}T_{i,j}T_{i,j}+R_{ij}T_iT_i+\lambda_{ij}\theta_{,i}T_i\right\}dads
\end{flalign*}
Based on the inequality of arithmetic and geometric means and also the Cauchy-Schwarz inequality we obtain:
\begin{flalign*}
|H_\omega(z,t)|&\leq C_\omega\left[\frac{e^{-2\omega t}}{2}\int\limits_{D(z)}\Phi(t)dz+\int\limits_0^t\int\limits_{D(z)}\omega e^{-2\omega s}\Phi(s) da ds+\right.\\
&+\int\limits_0^t\int\limits_{D(z)}e^{-2\omega s}\left[\frac{1}{T_0}\left(\kappa_{ij}\theta_{,i}\theta_{,j}+L_{ij}T_i\theta_{,j}\right)\right.+\left. A_{ijrs}T_{i,j}T_{r,s}+R_{ij}T_iT_j+\lambda_{ij}T_i\theta_{,j}\right]dads
\end{flalign*} 
Thus the alternative \eqref{e22} was proved.

From the inequality \eqref{e22} we can extract the following two inequalities:
\begin{flalign}
\label{e23}
-\frac{\partial H_\omega}{\partial z}\leq \frac{1}{C_\omega}H_\omega
\text{ and } 
\frac{\partial H_\omega}{\partial z}\geq \frac{1}{C_\omega}H_\omega
\end{flalign}
Taking into consideration the computations from Flavin, \cite{flavin-knops2} we obtain two estimates:
\begin{flalign}
\label{e25}
H_\omega(z,t)\geq H_\omega(z_0,t)e^{\frac{z-z_0}{C_\omega}}
\end{flalign}
$(\forall) z\geq z_0, z_0>0$ and $H_\omega(z_0,t)>0$ that lead to: $\lim\limits_{z\rightarrow \infty} e^{-\frac{z}{C_\infty}}\int\limits_{R(z)}\chi_\omega(t)dv>0$ and
\begin{flalign}
\label{f29}
-H_\omega(z,t)\leq H_\omega(z_0,t)e^{\frac{-z}{C_\omega}}
\end{flalign}
$(\forall) z\geq 0$ and $H_\omega(z,t)\leq 0$. From \eqref{f29} it is obvious that $H_\omega(z,t)\rightarrow0$ for $z\rightarrow\infty$.

Let us introduce the following estimate:
\begin{flalign}
\label{f30}
E_\omega(z,t)= \frac{e^{-2\omega t}}{2}\int\limits_{R(z))}\Phi(t)dz+\int\limits_0^t\int\limits_{R(z)} e^{-2\omega s}
\left[\omega\Phi(s)+\frac{1}{T_0}\left(\kappa_{ij}\theta_{,i}\theta_{,j}+L_{ij}T_i\theta_{,j}\right)\right.\\
+\left.A_{ijrs}T_{i,j}T_{r,s}+R_{ij}T_iT_j+\lambda_{ij}T_i\theta_{,j}\nonumber \right]da ds 
\end{flalign}
where $R(z)=\{\textbf{X}\in B|z<x_3\}$. Based on \eqref{f29} we observe that:
\begin{flalign}
\label{f31}
E_\omega(z,t)\leq E_\omega(0,t)e^{-\frac{z}{C_\omega}}, z\geq 0
\end{flalign}
Now, we can draw the following conclusions: if $(u_i, \phi,\psi,\theta,T_i)$ is a solution of the backward in time problem defined by the system \eqref{5a}-\eqref{5e} with the null initial conditions \eqref{e6} and boundary conditions \eqref{e7} there are two situations: the solution satisfies the asymptotic condition: $\lim\limits_{z\rightarrow\infty}e^{-\frac{z}{C_\omega}}\int\limits_{R(z)}\chi_\omega(t)dv>0$ pr it satisfies the decay estimate \eqref{f29}. 
This study can continue with obtaining of the upper bound for the amplitude $E_\omega(0,t)$ in terms of the boundary conditions, but this analysis will be the subject of another paper.

\section{Conclusions}
In the present paper it was studied the impossibility of localization in time for the solutions of the boundary value problem associated with the linear thermoelastic materials with double porosity structure and microtemperature. The uniqueness of the solutions for the backward in time problem in case of the materials with double porosity structure with microtemperature was proved. We can draw the conclusion that for the backward in time problem the only solution that vanishes is the null solution for every $t>0$. In the case of linear thermoelastic theories this results can not certify that the thermomechanical deformations from double porous bodies with microtemperature vanish after a finite time. In this situation it is necessary that the time should be unbounded to guarantee that the fraction of volumes becomes the same as the reference configuration. We obtained a function that defines a measure on the solutions and we deduced the usual exponential type alternative for the solutions of the problem defined in a semi-infinite cylinder.

\end{document}